\newtheorem{fact}{Fact}
\newtheorem{theorem}{Theorem}
\newtheorem{proposition}{Proposition}
\newtheorem{lemma}{Lemma}
\newlength{\alginputwidth}
\newlength{\algboxwidth}
\newsavebox{\algbox}
\newsavebox{\captionbox}
    {
        \setlength{\algboxwidth}{\columnwidth}
        \addtolength{\algboxwidth}{-\columnsep}
        \addtolength{\algboxwidth}{-1mm}
        \setlength{\alginputwidth}{\algboxwidth}
        \addtolength{\alginputwidth}{-1.7cm}
        \begin{figure}[htbp]
            \vspace*{-1mm}
            \centering
            \begin{lrbox}{\captionbox}
                \begin{minipage}[b]{\algboxwidth}
                    \centering
                    \label{#2}
                \end{minipage}
            \end{lrbox}
            \begin{lrbox}{\algbox}
                \begin{minipage}[b]{\algboxwidth}
                    \footnotesize
                    \vspace*{2mm}
    } 
    {
                    \vspace*{0.2mm}
               \end{minipage}
            \end{lrbox}
            \fbox{\usebox{\algbox}\hspace*{1mm}}
            \usebox{\captionbox}
            \vspace*{-1mm}
        \end{figure}
    }
\newsavebox{\algcodebox}
    {
        \begin{enumerate}
            \setlength{\itemsep}{2pt}
            \setlength{\parsep}{0pt}
            \setlength{\topsep}{0pt}
            \setlength{\parskip}{0pt}
            \setlength{\partopsep}{0pt}
    } 
    {\end{enumerate}}
\begin{document}

\title{The Yao Graph $Y_5$ is a Spanner}

\author{
{Wah Loon Keng}\thanks{
Lafayette College,
Easton, PA 18042, USA.
kengw{\tt @}lafayette.edu.} \and
{Ge Xia}\thanks{
Department of Computer Science,
Lafayette College,
Easton, PA 18042, USA.
xiag{\tt @}lafayette.edu.} \\
}
\date{}
\maketitle

\begin{abstract}
In this paper we prove that $Y_5$, the Yao graph with five cones, is a spanner with stretch factor $\rho = 2+\sqrt{3} \approx 3.74$. Since $Y_5$ is the only Yao graph whose status of being a spanner or not was open, this completes the picture of the Yao graphs that are spanners: a Yao graph $Y_k$ is a spanner if and only if $k \geq 4$.

We complement the above result with a lower bound of 2.87 on the stretch factor of $Y_5$. We also show that $YY_5$, the Yao-Yao graph with five cones, is not a spanner.
\vspace*{5mm}
\end{abstract}




\section{Introduction} \label{intro}
Let $S$ be a set of points in the plane. Fix an ordering $\prec$ on all pairs of points $\{a,b\}$ in $S$ based on their Euclidean distance $||ab||$ where ties are broken arbitrarily, i.e. if $||ab|| < ||cd||$ then $\{a,b\} \prec \{c,d\}$. Given an integer parameter $k > 0$, the {\em directed Yao graph}~\cite{yao}
with parameter $k$, denoted $\overrightarrow{Y_k}$, is constructed as follows. For each point $p$ in $S$, partition the space into $k$ equal-measured cones of angle $2\pi/k$ each whose apex is $p$ (the orientation of the cones is fixed for all points). In each cone, $p$ chooses the closest point $q$ in $S$ (if any) according to the ordering $\prec$ and adds $(p,q)$ to $\overrightarrow{Y_k}$ as a directed edge outgoing from $p$.
The (undirected) Yao graph with parameter $k$, denoted $Y_k$, is the underlying undirected graph of $\overrightarrow{Y_k}$.

A geometric graph $G$ on the point set $S$ is called a  {\em $\rho$-spanner} if for every two points $a,b \in S$, the shortest path distance between $a$ and $b$ in $G$ is at most $\rho\cdot||ab||$. $G$ is called a {\em geometric spanner} or simply {\em spanner} if $\rho$ is a constant.

The Yao graphs have been extensively studied, and in particular many of their spanning properties have been discovered. It is known that $Y_2$ and $Y_3$ are not spanners~\cite{molla}, $Y_4$ is a spanner with stretch factor $8\sqrt{2}(29+23\sqrt{2})$~\cite{y4}, $Y_6$ is a spanner with stretch factor $17.7$~\cite{y6}, and that for $k \geq 7$, $Y_k$ is a spanner with stretch factor $\frac{1}{1-2\sin(\pi/k))}$~\cite{y7}. The question of whether or not $Y_5$ is a spanner was previously open.


In this paper we prove that $Y_5$ is a $\rho$-spanner, where $\rho = 2+\sqrt{3} \approx 3.74$. Combining this with the previous results, we now have a complete picture of the spanners that can be constructed with Yao graphs: any Yao graph $Y_k$ is a spanner if and only if $k \geq 4$. We also give a lower bound of 2.87 on the stretch factor of $Y_5$.

\vspace*{.4cm}\noindent
{\bf Recent Developments.}
An earlier version of this paper~\cite{early} proved a stretch factor of $\frac{1}{1-2\sin{(3\pi/20)}} \approx 10.87$ for $Y_5$ using a simple approach. In a recent manuscript, Barba et al.~\cite{personal} independently proved the same bound of 10.87 using the same approach and they also used that approach to improve the stretch factor of $Y_k$ for odd $k\geq 7$ to $\frac{1}{1-2\sin(3\pi/4k)}$. In addition, Barba et al.~\cite{personal} improved the stretch factor of $Y_6$ to 5.8.

In contrast to our main results, we show that $YY_5$, the Yao-Yao graph with five cones, is not a spanner. The {\em directed Yao-Yao graph} with parameter $k > 0$, denoted $\overrightarrow{YY_k}$, is constructed in two stages. The first stage proceeds as in the construction of $\overrightarrow{Y_k}$. In the second stage, for each point $p \in S$, and for each cone defined by $p$ in the first stage, point $p$ keeps {\em only} the shortest incoming edge (if any) according to the ordering $\prec$ in $\overrightarrow{Y_k}$ in the cone. The directed edges kept by the points in $S$ in the second stage constitute $\overrightarrow{YY_k}$. The (undirected) Yao-Yao graph $YY_k$ denotes the underlying undirected graph of $\overrightarrow{YY_k}$. Clearly, $\overrightarrow{YY_k}$ is a subgraph of $\overrightarrow{Y_k}$, and $YY_k$ is a subgraph of $Y_k$. The Yao-Yao graphs have an advantage over the Yao graphs in that their maximum degree is bounded:  Whereas $Y_k$ can have unbounded degree, the maximum degree of $YY_k$ is at most $2k$. It is known that $YY_4$ is not a spanner~\cite{eucg} and is not plane~\cite{joco} and that for any integer $k \geq 6$, $YY_{6k}$ is a spanner~\cite{soda13}. It is still open whether the Yao-Yao graph is a spanner for other values of the parameter $k$.

Table~\ref{tab:results} shows the stretch factors of Yao and Yao-Yao graphs for various values of the parameter $k$.

\begin{table}
\caption {Stretch factors of Yao and Yao-Yao graphs} \label{tab:results}
\begin{center}\renewcommand{\arraystretch}{1.6}
\begin{tabular}{|c|c|c|}
  \hline
  Parameter $k$ & Yao Graph $Y_k$ & Yao-Yao Graph $YY_k$ \\\hline\hline
  $k=2,3$ & not a spanner~\cite{molla} & not a spanner~\cite{molla} \\\hline
  $k=4$ & $8\sqrt{2}(29+23\sqrt{2})$~\cite{y4} & not a spanner~\cite{eucg} \\\hline
  $k=5$ & \pbox{7.5cm}{$1/(1-2\sin{(3\pi/20)})\approx 10.87$~\cite{personal, early} \\ {\color{blue} $2+\sqrt{3} \approx 3.74$ [this paper]}} & {\color{blue} not a spanner [this paper]} \\[0.6ex]\hline
  $k=6$ & 17.7~\cite{y6}, 5.8~\cite{personal} & not a spanner~\cite{molla} \\\hline
  $k \geq 7$ & \pbox{7.5cm}{$1/(1-2\sin(\pi/k))$ for even $k$~\cite{y7} \\  $1/(1-2\sin(3\pi/4k))$ for odd $k$~\cite{personal}} & \pbox{5.6cm}{11.67 for $k=6k'$, $k'\geq 6$~\cite{soda13}\\ open for other values of $k\geq 7$} \\[0.6ex]
  \hline
\end{tabular}
\end{center}
\end{table}

The paper is organized as follows. In Section~\ref{sec:prelim}, we introduce the notations and terminologies used throughout the paper. In Section~\ref{section:main}, we prove that $Y_5$ is a spanner. In Section~\ref{section:lower_bound}, we give a lower bound of 2.87 on the stretch factor of $Y_5$. We show in Section~\ref{section:yy} that $YY_5$ is not a spanner. We conclude the paper in Section~\ref{sec:conclusion}.

\section{Preliminaries}
\label{sec:prelim}
Given a set of points $S$ in the two-dimensional Euclidean
plane, the complete Euclidean graph ${\cal E}$ on $S$ is
defined to be the complete graph whose point-set is $S$. Each
edge $ab$ connecting points $a$ and $b$ is assumed to be embedded in
the plane as the straight line segment $ab$; we define its {\em
length} to be the Euclidean distance $||ab||$.

Let $G$ be a subgraph of ${\cal E}$. The length of a simple path $P
= m_0, m_1,\ldots, m_r = b$ between two points $a, b$ in $G$ is $|P| = \sum_{j=0}^{r-1} ||m_jm_{j+1}||.$
For two points $a$, $b$ in $G$, we denote by $d_G(a, b)$ (or simply $d(a, b)$ if $G$ is clear from the context) the length of a shortest path between $a$ and $b$ in $G$. $G$ is said to be a {\em spanner} (of ${\cal E}$) if there is a constant $\rho$ such that, for every
two points $a,b \in G$, $d(a, b) \leq \rho \cdot ||ab||$. The constant $\rho$ is called the {\em
stretch factor} or {\em
spanning ratio} of $G$ (with respect to ${\cal E}$).

For each point $p\in S$, label the five cones around it by $C_1^p, C_2^p, \ldots, C_5^p$ in the counterclockwise order. The two rays on the boundary of each cone are referred to as the {\em start-ray} and the {\em end-ray}, in the counterclockwise order. Fix an orientation of the cones such that the start-ray of $C_1^p$ for all $p$ is horizontal and points to the right. The {\em bisector} of a cone is a ray that separates the cone into two equal-sized subcones. See Figure~\ref{fig:cones} for an illustration. The following is a simple fact:
\begin{fact}\label{fact2}
Rotating around any point in the plane by $2\pi n/5$, where $n$ is an integer, does not change the orientation of the cones (up to a relabeling). Furthermore, mirror-flipping along the bisector of any cone does not change the orientation of the cones (up to a relabeling).
\end{fact}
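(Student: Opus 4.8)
The plan is to reduce both statements to an elementary computation on the angular coordinates of the cone boundaries. Since the cones at every apex $p$ have identical angular extents (the orientation is fixed and distinct apices differ only by a translation), a direction can be recorded by a single angle in $[0,2\pi)$, and the whole cone structure is determined by the set of boundary rays $R=\{\,k\cdot\tfrac{2\pi}{5}\ :\ k\in\mathbb{Z}\,\}$ together with the counterclockwise cyclic order on the five arcs it cuts out, where $C_i^p$ corresponds to the arc $\big[(i-1)\tfrac{2\pi}{5},\, i\tfrac{2\pi}{5}\big)$. Thus it suffices to show that a rotation by $2\pi n/5$ and a reflection across a cone bisector each map $R$ onto itself; the induced bijection of the five arcs is then automatically a relabeling of the cones, which is exactly what ``does not change the orientation up to a relabeling'' means.

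For the rotation, a rotation by $2\pi n/5$ sends the direction $\theta$ to $\theta+2\pi n/5$, which sends $k\cdot\tfrac{2\pi}{5}$ to $(k+n)\cdot\tfrac{2\pi}{5}\in R$, so $R$ is preserved. Because rotations are orientation preserving, the arc $\big[(i-1)\tfrac{2\pi}{5},\, i\tfrac{2\pi}{5}\big)$ is carried to $\big[(i-1+n)\tfrac{2\pi}{5},\, (i+n)\tfrac{2\pi}{5}\big)$; that is, the rotation realizes the relabeling $C_i^p\mapsto C_{((i+n-1)\bmod 5)+1}^p$. This proves the first statement.

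For the reflection, the bisector of $C_i^p$ points in the direction $\beta_i=(2i-1)\tfrac{\pi}{5}$, and reflection across the line through $p$ in this direction sends $\theta$ to $2\beta_i-\theta$. A boundary ray $k\cdot\tfrac{2\pi}{5}$ is then sent to $2\beta_i-k\cdot\tfrac{2\pi}{5}=(2i-1-k)\cdot\tfrac{2\pi}{5}\in R$, so again $R$, and hence the cone partition, is preserved; tracking endpoints shows the induced relabeling is $C_j^p\mapsto C_{((2i-j)\bmod 5)}^p$ (with representatives in $\{1,\dots,5\}$), which fixes $C_i^p$, swaps $C_{i-1}^p$ with $C_{i+1}^p$, and so on. This proves the second statement. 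I do not expect any genuine obstacle; the only point requiring a little care is that a reflection reverses orientation, so one does not get a counterclockwise‑preserving relabeling for free — but since $R$ itself is invariant under the reflection, the reflected partition is literally the same partition, and matching arcs to arcs yields the relabeling regardless of orientation. In short, the content of the fact is just that the $2\pi/5$‑spaced ray system through any point carries the claimed cyclic ($\mathbb{Z}/5$) and dihedral symmetry.
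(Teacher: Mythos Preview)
Your proof is correct. The paper does not actually supply a proof of this fact; it is stated as a ``simple fact'' and left to the reader, so there is nothing to compare against beyond noting that your argument makes explicit the dihedral symmetry of the $2\pi/5$-spaced ray system that the authors take as evident.
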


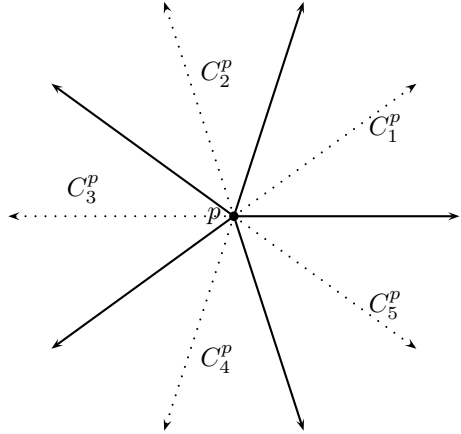
\begin{figure}[tbhp]
\begin{center}
\begin{pspicture}(6,6)
  \pnode(3,3){u}\psdot(u)\uput[180](u){$p$}

  \pnode(6,3){x1}

  \pnode([nodesep=3,angle=72]{x1}u){x2}
  \pnode([nodesep=3,angle=72]{x2}u){x3}
  \pnode([nodesep=3,angle=72]{x3}u){x4}
  \pnode([nodesep=3,angle=72]{x4}u){x5}

  \psline{->}(u)(x1)
  \psline{->}(u)(x2)
  \psline{->}(u)(x3)
  \psline{->}(u)(x4)
  \psline{->}(u)(x5)

  \pnode([nodesep=2,angle=36]{x1}u){y1}\uput[0](y1){$C_1^p$}
  \pnode([nodesep=3,angle=36]{x1}u){z1}\psline[linestyle=dotted]{->}(u)(z1)

  \pnode([nodesep=2,angle=36]{x2}u){y2}\uput[0](y2){$C_2^p$}
  \pnode([nodesep=3,angle=36]{x2}u){z2}\psline[linestyle=dotted]{->}(u)(z2)

  \pnode([nodesep=2,angle=36]{x3}u){y3}\uput[90](y3){$C_3^p$}
  \pnode([nodesep=3,angle=36]{x3}u){z3}\psline[linestyle=dotted]{->}(u)(z3)

  \pnode([nodesep=2,angle=36]{x4}u){y4}\uput[0](y4){$C_4^p$}
  \pnode([nodesep=3,angle=36]{x4}u){z4}\psline[linestyle=dotted]{->}(u)(z4)

  \pnode([nodesep=2,angle=36]{x5}u){y5}\uput[0](y5){$C_5^p$}
  \pnode([nodesep=3,angle=36]{x5}u){z5}\psline[linestyle=dotted]{->}(u)(z5)

\end{pspicture}
\caption{The cones and their bisectors.}\label{fig:cones}
\end{center}
\end{figure}

In this paper, all the angles labeled as $\angle xyz$ are measured from ray $\overrightarrow{yx}$ to ray $\overrightarrow{yz}$ in counterclockwise direction. $|\angle xyz|$ indicates the (unsigned) magnitude of  $\angle xyz$.

Next we give two lemmas that will be useful in our proof.

\begin{lemma}\label{lem:bdd}
Let $a$, $b$, and $c$ be three distinct points in the plane such that $||ac|| \leq ||ab||$ and $|\angle bac| \leq \theta$, where $\theta \in (0, \pi/3)$ is a constant. Then
$$||ac||+\lambda||bc|| \leq \lambda||ab||,$$ where $\lambda =\frac{1}{1-2\sin (\theta/2)}$.
\end{lemma}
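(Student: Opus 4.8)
The plan is to establish the inequality by analyzing the triangle $abc$ and bounding $\|bc\|$ from above in terms of $\|ab\|$ and $\|ac\|$. First I would place $a$ at the origin and write $\|ac\| = r$, $\|ab\| = s$ with $r \le s$, and let $\alpha = |\angle bac| \le \theta$. By the law of cosines, $\|bc\|^2 = r^2 + s^2 - 2rs\cos\alpha = (s-r)^2 + 2rs(1-\cos\alpha) = (s-r)^2 + 4rs\sin^2(\alpha/2)$. The goal inequality $r + \lambda\|bc\| \le \lambda s$ is equivalent to $\|bc\| \le s - r/\lambda = s - r(1 - 2\sin(\theta/2)) = (s-r) + 2r\sin(\theta/2)$. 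So it suffices to show $\|bc\| \le (s-r) + 2r\sin(\theta/2)$.

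The key step is then to bound $\|bc\|$ by the length of a two-segment path from $b$ to $c$ that routes through a convenient auxiliary point. Specifically, since $r \le s$, the point $c$ lies inside (or on) the circle of radius $s$ about $a$; let $c'$ be the point on segment $ab$ at distance $r$ from $a$ (so $\|ac'\| = \|ac\| = r$ and $\|c'b\| = s - r$). Then by the triangle inequality $\|bc\| \le \|bc'\| + \|c'c\| = (s-r) + \|c'c\|$. Since $c$ and $c'$ both lie on the circle of radius $r$ centered at $a$ and the angle between them at $a$ is exactly $\alpha$, the chord length is $\|c'c\| = 2r\sin(\alpha/2) \le 2r\sin(\theta/2)$, using that $\sin$ is increasing on $(0,\pi/2)$ and $\alpha/2 \le \theta/2 < \pi/6$. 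Combining, $\|bc\| \le (s-r) + 2r\sin(\theta/2)$, which is exactly what we needed.

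Finally I would verify that $\lambda = \frac{1}{1-2\sin(\theta/2)}$ is well-defined and positive: since $\theta < \pi/3$ we have $\theta/2 < \pi/6$, so $2\sin(\theta/2) < 2\sin(\pi/6) = 1$, hence $1 - 2\sin(\theta/2) \in (0,1)$ and $\lambda > 1$. This justifies the algebraic manipulation dividing by $\lambda$ and rearranging, closing the argument. The main obstacle, such as it is, is purely bookkeeping: making sure the equivalence between the target inequality and the chord bound is stated correctly (the constant $\lambda$ is chosen precisely so that $r/\lambda = r(1-2\sin(\theta/2))$ matches the $2r\sin(\theta/2)$ term), and confirming the monotonicity of $\sin$ on the relevant range. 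There is no genuine difficulty; the triangle-inequality detour through $c'$ is the natural idea and everything else follows by direct computation.
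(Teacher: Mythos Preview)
Your proof is correct. The paper's own proof of this lemma is not self-contained: it simply invokes Lemma~10 of~\cite{y7} to obtain the bound $\|bc\| \le \|ab\| - \|ac\|/t$ with $t = \frac{1+\sqrt{2-2\cos\theta}}{2\cos\theta-1}$, and then spends the rest of the argument on trigonometric identities showing that this $t$ equals $\lambda = \frac{1}{1-2\sin(\theta/2)}$. You instead prove the same bound $\|bc\| \le \|ab\| - \|ac\|/\lambda$ directly, via the elementary triangle-inequality detour through the point $c'$ on segment $ab$ with $\|ac'\| = \|ac\|$; the chord-length formula $\|cc'\| = 2r\sin(\alpha/2)$ and the monotonicity of $\sin$ on $[0,\pi/6]$ then finish it off. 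Your route is genuinely more elementary and self-contained (no external citation, no need to manipulate $t$ into the form $\lambda$), at the cost of nothing---the argument is short and transparent. The paper's route has the minor advantage of recording that the bound is already in the literature.
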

\begin{proof}
By Lemma 10 of~\cite{y7}, $
    ||bc|| \leq ||ab||-||ac||/t,$
    where $t=\frac{1+\sqrt{2-2\cos \theta}}{2\cos \theta -1}.$ By trigonometric identities, $
    t=\frac{1+\sqrt{2-2\cos \theta}}{2\cos \theta -1}=\frac{1}{1-\sqrt{2-2\cos \theta}}=\frac{1}{1-\sqrt{4\sin^2 \frac{\theta}{2}}} =\frac{1}{1-2\sin \frac{\theta}{2}} = \lambda.
    $ The lemma follows.
\end{proof}

\begin{lemma}\label{lem:xy}
Let $a,b,c$ be three points in the plane. Let $\theta =|\angle bac|$ and let $\lambda > 1$ be a constant. Suppose that $\cos \theta > \frac{1}{\lambda}$, $||bc|| < ||ab||$ and $\frac{||ac||}{||ab||} = \frac{2\lambda^2\cos\theta -2\lambda}{\lambda^2-1}$.
Then $||ad||+\lambda||bd|| \leq \lambda||ab||$ for all points $d$ in the line segment $ac$.
\end{lemma}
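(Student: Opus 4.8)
The plan is to reduce the claim to a single-variable inequality and then dispatch it by a convexity argument together with a short endpoint computation. Because the inequality $||ad||+\lambda||bd||\le\lambda||ab||$ is invariant under rescaling, I would first normalize $||ab||=1$ and put $r:=||ac||=\frac{2\lambda^{2}\cos\theta-2\lambda}{\lambda^{2}-1}$; the hypothesis $\cos\theta>1/\lambda$ makes $r>0$, so $ac$ is a nondegenerate segment, and it also forces $\theta\in[0,\pi/2)$. A point $d$ on $ac$ is determined by $s:=||ad||\in[0,r]$, and the law of cosines in triangle $abd$ gives $||bd||=\sqrt{1+s^{2}-2s\cos\theta}$. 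Hence it suffices to show that
\[
  g(s)\;:=\;s+\lambda\sqrt{1+s^{2}-2s\cos\theta}\;\le\;\lambda
\]
for all $s\in[0,r]$.

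The heart of the argument is that $g$ is convex on $[0,r]$. Writing $h(s)=\sqrt{1+s^{2}-2s\cos\theta}=||bd||$, a direct differentiation gives $g''(s)=\lambda\sin^{2}\theta/h(s)^{3}$, which is nonnegative wherever $h(s)>0$; and $h(s)>0$ throughout the segment because $1+s^{2}-2s\cos\theta=(s-\cos\theta)^{2}+\sin^{2}\theta$, strictly positive unless $\theta=0$ (a degenerate collinear situation in which $g$ is the sum of a linear function and $\lambda|1-s|$, hence convex anyway, and which I would settle by a one-line direct check). Granting convexity, $g$ lies on or below the chord joining $(0,g(0))$ to $(r,g(r))$, so it suffices to check $g(0)\le\lambda$ and $g(r)\le\lambda$. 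The first is immediate since $g(0)=\lambda$.

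For the endpoint $s=r$ I would proceed as follows. First, $\lambda-r>0$: this is equivalent to $2\lambda\cos\theta<\lambda^{2}+1$, which holds because $\frac{\lambda^{2}+1}{2\lambda}\ge 1>\cos\theta$ by AM--GM. So the desired inequality $\lambda\sqrt{1+r^{2}-2r\cos\theta}\le\lambda-r$ may be squared; expanding, cancelling, and dividing through by $r>0$ shows it to be equivalent to $(\lambda^{2}-1)\,r=2\lambda^{2}\cos\theta-2\lambda$, which is precisely the hypothesis $||ac||/||ab||=(2\lambda^{2}\cos\theta-2\lambda)/(\lambda^{2}-1)$. Hence in fact $g(r)=\lambda$, and convexity then gives $g(s)\le\lambda$ on all of $[0,r]$, completing the proof.

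I do not expect a genuine obstacle: the whole thing is a textbook convexity estimate plus routine algebra. The two points that need care are (i) that the radicand $||bd||^{2}$ stays strictly positive along $ac$, so that the second-derivative computation is legitimate and convexity genuinely holds --- handled by the remark about $\theta>0$, with $\theta=0$ dealt with by hand; and (ii) checking the sign $\lambda-r>0$ before squaring in the $s=r$ computation. I would also note in passing that the hypothesis $||bc||<||ab||$ is in fact automatically satisfied here (it reduces to $\cos\theta<\lambda$) and so plays no role in the proof; it is retained in the statement only because that is the form in which the lemma gets used.
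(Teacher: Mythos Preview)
Your proof is correct and is genuinely different from the paper's. Both arguments begin identically: normalize $||ab||=1$, parametrize $d$ on $ac$ by $s=||ad||$, use the law of cosines to write $||bd||=\sqrt{1+s^2-2s\cos\theta}$, and verify that equality $||ac||+\lambda||bc||=\lambda$ holds at the far endpoint. The divergence is in how the intermediate points are handled. You observe that $g(s)=s+\lambda||bd||$ is convex in $s$ (since $g''(s)=\lambda\sin^2\theta/||bd||^3\ge 0$), so with $g(0)=g(r)=\lambda$ the chord is horizontal and the inequality follows immediately. The paper instead reparametrizes by the angle $\gamma=|\angle dba|$, rewrites the target as a bound on the ratio $||ad||/(||ab||-||bd||)$, uses the law of sines to express this ratio as $\sin\gamma/(\sin(\theta+\gamma)-\sin\theta)$, and shows by direct differentiation that this function is nondecreasing in $\gamma$. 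Your route is shorter and avoids the law-of-sines detour; the paper's monotonicity argument is a bit more geometric in flavor but needs the hypothesis $||bc||<||ab||$ to keep the denominator positive. You are also right that this hypothesis is redundant under the stated assumptions (it reduces to $\cos\theta<\lambda$), something the paper does not remark on.
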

\begin{proof}
Without loss of generality, let $||ab|| = 1$. Let $x=||ad||$.  Then $||bd|| = \sqrt{1+x^2-2x\cos\theta}$. See Figure~\ref{fig:lem2}.


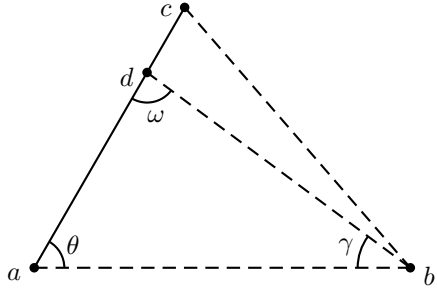
\begin{figure}[tbhp]
\begin{center}
\begin{pspicture}(6,6)

	\pnode(0,0){a}\uput[200](a){$a$}
	\pnode(5,0){b}\uput[-20](b){$b$}
	\pnode([nodesep=4,,angle=60]{b}a){c}\uput[190](c){$c$}
	
	\pnode([nodesep=3]{c}a){d}\uput[190](d){$d$}

	\psline(a)(c)
	\psline[linestyle=dashed](a)(b)
	\psline[linestyle=dashed](c)(b)
\psline[linestyle=dashed](b)(d)
	
	\pstMarkAngle[MarkAngleRadius=0.4,  LabelSep=0.6]{b}{a}{d}{$\theta$}
	\pstMarkAngle[MarkAngleRadius=0.7,  LabelSep=0.9]{d}{b}{a}{$\gamma$}
	\pstMarkAngle[MarkAngleRadius=0.4,  LabelSep=0.6]{a}{d}{b}{$\omega$}
	
\psdot(a
)
\psdot(b
)
\psdot(c
)
\psdot(d
)

\end{pspicture}
\caption{Illustration for the proof of Lemma~\ref{lem:xy}}\label{fig:lem2}
\end{center}
\end{figure}


Note that \begin{align}
& \lambda^2-2\lambda + 1 \geq 0\nonumber\\
\Rightarrow~~~  & \lambda^2-2\lambda\cos\theta + 1 \geq 0\nonumber\\
\Rightarrow~~~  & 2\lambda\cos\theta - 2 \leq \lambda^2-1\nonumber\\
\Rightarrow~~~  & \frac{\lambda(2\lambda\cos\theta - 2)}{\lambda^2-1} \leq \lambda.\nonumber
\end{align}
Therefore $x\leq ||ac|| = \frac{2\lambda^2\cos\theta -2\lambda}{\lambda^2-1} \leq \lambda$.
Solve $||ad||+\lambda||bd|| = x+\lambda\sqrt{1+x^2-2x\cos\theta} = \lambda = \lambda||ab||$ for $x \in (0,\lambda]$, we have
\begin{align}
& x+\lambda\sqrt{1+x^2-2x\cos\theta} = \lambda\nonumber\\
\Leftrightarrow~~~  & \lambda\sqrt{1+x^2-2x\cos\theta} = \lambda-x\nonumber\\
\Leftrightarrow~~~  & \lambda^2(1+x^2-2x\cos\theta) = (\lambda-x)^2\nonumber\\
\Leftrightarrow~~~  & \lambda^2(x^2-2x\cos\theta) = x^2-2\lambda x\nonumber\\
\Leftrightarrow~~~  & \lambda^2(x-2\cos\theta) = x-2\lambda\nonumber\\
\Leftrightarrow~~~  & (\lambda^2-1)x = 2\lambda^2\cos\theta -2\lambda\nonumber\\
\Leftrightarrow~~~  & x = \frac{2\lambda^2\cos\theta -2\lambda}{\lambda^2-1} = ||ac||.\nonumber
\end{align} This implies that $||ac||+\lambda||bc|| = \lambda||ab||$.

Let $\gamma = |\angle dba|$ and $\omega = |\angle adb|$. By the law of sines in the triangle $\triangle abd$, we have
\begin{align}
\frac{||bd||}{\sin\theta} = \frac{||ad||}{\sin\gamma} = \frac{||ab||}{\sin\omega}.\label{lawofsines}
\end{align}
Therefore $$\frac{||ad||}{||ab||-||bd||} =\frac{\sin\gamma}{\sin\omega-\sin\theta}= \frac{\sin\gamma}{\sin(\pi-\theta-\gamma)-\sin\theta}= \frac{\sin\gamma}{\sin(\theta+\gamma)-\sin\theta}.$$

Define a function $$f = \frac{\sin\gamma}{\sin(\theta+\gamma)-\sin\theta}.$$
We will show
$\frac{\partial{f}}{\partial{\gamma}} \geq 0.
$ This is sufficient for the lemma because we can transform the triangle $\triangle abd$ to triangle $\triangle abc$ by moving $d$ toward $c$ (i.e., by increasing $\gamma$).

By a standard calculation,
\begin{align}
\frac{\partial{f}}{\partial{\gamma}} &= \frac{\cos\gamma(\sin(\theta+\gamma)-\sin\theta)-\sin\gamma\cos(\theta+\gamma)}{(\sin(\theta+\gamma)-\sin\theta)^2} \nonumber\\
&= \frac{\cos\gamma\sin(\theta+\gamma)-\cos\gamma\sin\theta-\sin\gamma\cos(\theta+\gamma)}{(\sin(\theta+\gamma)-\sin\theta)^2} \nonumber\\
&= \frac{\sin\theta-\cos\gamma\sin\theta}{(\sin(\theta+\gamma)-\sin\theta)^2} \nonumber\\
&= \frac{\sin\theta(1-\cos\gamma)}{(\sin(\theta+\gamma)-\sin\theta)^2}.
\end{align} We have $\frac{\partial{f}}{\partial{\gamma}} \geq 0$ because $\sin\theta > 0$, $\cos\gamma \leq 1$, and $||bc|| < ||ab||$ (and hence $\sin(\theta+\gamma)>\sin\theta$). This proves the lemma.

\end{proof}

\section{$Y_5$ is a Spanner} \label{section:main}

Let $\rho = 2+\sqrt{3} \approx 3.74$. Fix a constant $\overline{\theta}=\arccos(1-\frac{1}{\rho}) = \arccos(\sqrt{3}-1)\approx 0.75$. It is easy to verify that $$\rho  = \frac{1}{1-\cos\overline{\theta}} = \frac{1}{1-2\sin(\overline{\theta}/2)}.$$

This section contains a proof for the following main theorem.
\begin{theorem}\label{thm:main}
$Y_5$ is a $\rho$-spanner, where $\rho = 2+\sqrt{3} \approx 3.74$.
\end{theorem}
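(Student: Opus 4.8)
The plan is to prove the theorem by induction on the ordering $\prec$ of point pairs: for each pair $\{a,b\}$, assuming the spanner bound $d_{Y_5}(x,y) \le \rho\|xy\|$ holds for all pairs $\{x,y\} \prec \{a,b\}$, we establish $d_{Y_5}(a,b) \le \rho\|ab\|$. The base case is the closest pair, which is automatically an edge of $Y_5$ and contributes $d(a,b) = \|ab\| \le \rho\|ab\|$. For the inductive step, consider the cone $C_i^a$ containing $b$, and let $c$ be the point that $a$ connects to in that cone, so $(a,c) \in Y_5$ and $\|ac\| \le \|ab\|$. If $c = b$ we are done, so assume $c \ne b$. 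The angle $|\angle bac|$ is at most the cone angle $2\pi/5 = 72^\circ$, which unfortunately exceeds $\overline{\theta} \approx 43^\circ$, so the clean one-step argument of Lemma~\ref{lem:bdd} does not directly apply — this is precisely why $Y_5$ is harder than $Y_6$ and $Y_k$ for $k \ge 7$, and it is where the main difficulty lies.

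The core of the argument will therefore be a careful case analysis on the geometric configuration of $a$, $b$, and $c$ within the cone $C_i^a$. The idea is to "charge" the detour through $c$ and then recurse on the strictly shorter pair: we have $d_{Y_5}(a,b) \le \|ac\| + d_{Y_5}(c,b)$, and since $\|cb\| < \|ab\|$ (which holds because $c$ lies in the cone with $|\angle bac| < 90^\circ$ and $\|ac\| \le \|ab\|$, so $b$ is not the closest point forcing $\|cb\|$ small — more carefully, one checks $\|cb\|^2 = \|ac\|^2 + \|ab\|^2 - 2\|ac\|\|ab\|\cos\angle bac < \|ab\|^2$ when $\|ac\| < 2\|ab\|\cos\angle bac$), the inductive hypothesis gives $d_{Y_5}(c,b) \le \rho\|cb\|$. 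Then it suffices to show $\|ac\| + \rho\|cb\| \le \rho\|ab\|$. When $|\angle bac| \le \overline{\theta}$ this is immediate from Lemma~\ref{lem:bdd} with $\theta = \overline{\theta}$, $\lambda = \rho$. The hard case is $\overline{\theta} < |\angle bac| \le 72^\circ$: here the single-hop bound fails, and I expect the proof to walk from $a$ along a path of Yao edges, at each step moving into the cone pointing toward $b$, and to partition the angular range $(\overline{\theta}, 72^\circ]$ according to where $c$ falls relative to the sub-cone structure and the bisector, invoking Fact~\ref{fact2} (rotation and reflection symmetry of the five-cone arrangement) to reduce the number of distinct cases. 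Lemma~\ref{lem:xy} is evidently tailored to handle an intermediate regime where one wants the bound $\|ad\| + \rho\|bd\| \le \rho\|ab\|$ to hold uniformly along an entire segment $ad \subseteq ac$, which suggests the proof routes through a point $d$ on $ac$ (perhaps the first Yao edge out of $a$ toward $b$ lands inside the triangle or on the segment) and then applies the inductive hypothesis to $\{d,b\}$.

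Concretely, I would structure the induction step as follows. First, set up the notation: let $C$ be the cone of $a$ containing $b$, let $c$ be $a$'s Yao-neighbor in $C$. Second, dispose of the easy subcases: $c = b$ (done); $\|ac\|$ very small or $|\angle bac| \le \overline{\theta}$ (apply Lemma~\ref{lem:bdd} and the inductive hypothesis to $\{b,c\}$). Third, for the remaining subcase $\overline{\theta} < |\angle bac| \le 72^\circ$, argue that $b$ lies in a specific cone $C_j^c$ of $c$; by the symmetry in Fact~\ref{fact2} we may normalize which cone that is. Then let $e$ be $c$'s Yao-neighbor in $C_j^c$, so $(c,e) \in Y_5$ and $\|ce\| \le \|cb\| < \|ab\|$; this reduces to bounding $\|ac\| + \|ce\| + d_{Y_5}(e,b)$, and one shows both that $\{e,b\} \prec \{a,b\}$ (so the inductive hypothesis applies) and that the accumulated length $\|ac\| + \|ce\| + \rho\|eb\| \le \rho\|ab\|$. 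The last inequality is the crux and is where I anticipate needing the precise value $\rho = 2+\sqrt{3}$ and the tight estimate of Lemma~\ref{lem:xy}: the two hops $a \to c \to e$ must be shown to make enough angular and metric progress toward $b$ that the leftover distance to $b$ has shrunk by the required factor. The main obstacle, then, is verifying this two-hop (possibly needing a third hop in a boundary configuration) progress bound with a sharp enough constant — controlling the worst-case interplay between how far $c$ can be from the ray $\overrightarrow{ab}$ and how the cone at $c$ can be oriented — and organizing the resulting sub-cases so that every one of them closes with the single target inequality $\sum(\text{hop lengths}) + \rho\,(\text{residual}) \le \rho\|ab\|$.
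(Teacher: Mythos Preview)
Your one-sided inductive scheme has a genuine gap that you paper over. You assert that $\|cb\| < \|ab\|$ ``because $c$ lies in the cone with $|\angle bac| < 90^\circ$ and $\|ac\| \le \|ab\|$,'' and then more carefully note this holds when $\|ac\| < 2\|ab\|\cos\angle bac$. But that last condition is \emph{not} guaranteed: with cone angle $2\pi/5$ we have $2\cos(2\pi/5) \approx 0.618$, while $\|ac\|$ can be as large as $\|ab\|$. Concretely, if $b$ sits near one cone boundary and $c$ near the other with $\|ac\|$ close to $\|ab\|$, then $\|cb\| \approx 2\sin(\pi/5)\,\|ab\| \approx 1.18\,\|ab\| > \|ab\|$, so $\{c,b\} \not\prec \{a,b\}$ and you cannot invoke the inductive hypothesis on $\{c,b\}$. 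Your proposed fix of taking a second hop $c \to e$ inherits the same problem: you give no argument that $\|eb\| < \|ab\|$, and indeed the same obstruction recurs. This is exactly why the classical $\frac{1}{1-2\sin(\pi/k)}$ argument breaks for $k=5$.

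The idea you are missing, and that the paper uses, is to work from \emph{both ends at once}. Let $w$ be $u$'s Yao neighbor in the cone toward $v$, and simultaneously let $z$ be $v$'s Yao neighbor in the cone toward $u$; set $\alpha = |\angle vuw|$, $\beta = |\angle zvu|$. Now there are three candidate paths, with associated targets $g_1 = \|uw\| + \rho\|vw\|$, $g_2 = \|vz\| + \rho\|uz\|$, and crucially $g_3 = \|uw\| + \|vz\| + \rho\|wz\|$ (the path $u \to w \rightsquigarrow z \to v$). Lemma~\ref{lem:bdd} handles $g_1$ or $g_2$ whenever $\alpha \le \overline\theta$ or $\beta \le \overline\theta$. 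In the remaining regime $\alpha,\beta > \overline\theta$, the heart of the proof is a geometric proposition showing that if both $g_1 > \rho$ and $g_2 > \rho$ then $\|wz\| \le 2\cos\overline\theta - 1$; Lemma~\ref{lem:xy} is used here not as you guessed (to slide along a segment toward $b$) but to force $w$ and $z$ to lie far out along their rays, which together with $\alpha+\beta \le 3\pi/5$ pinches $w$ and $z$ close together. Then $g_3 \le 1 + 1 + \rho(2\cos\overline\theta - 1) = \rho$, and since $\|wz\| < 1 = \|uv\|$ the inductive hypothesis applies to $\{w,z\}$. The bidirectional path $P_3$ is the mechanism that rescues the induction precisely in the configurations where your one-sided step fails to shrink the distance.
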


Let $G$ be a $Y_5$ graph with point set $S$. We will prove that for any pair of points $u,v \in S$, $d(u,v) \leq \rho\cdot||uv||$. We proceed by induction on the ordering $\prec$ of the pairs of points in $S$ (which is based on the Euclidean distance $||uv||$). For the base case where $\{u,v\}$ is the first pair in the ordering $\prec$, $u,v$ is connected in $G$, and hence $d(u,v) =||uv|| \leq \rho\cdot||uv||$.

For the inductive step, we will prove $d(u,v) \leq \rho\cdot||uv||$ based on the inductive hypothesis that $d(x,y) \leq \rho\cdot||xy||$ for all pairs of points $x,y\in S$ with $\{x,y\} \prec \{u,v\}$.  Without loss of generality, assume $||uv|| = 1$.

Because of Fact~\ref{fact2}, we can assume that $v$ is in the first cone of $u$, i.e., $v \in C_1^u$. Furthermore, we can assume that $v$ is on or below the bisector of $C_1^u$ because otherwise by Fact~\ref{fact2} we can mirror-flip the geometry along the bisector of $C_1^u$. Let $A_1^u(v)$ be the arc centered at $u$ with radius $||uv||$ that spans cone $C_1^u$. Let $a$ and $b$ be the start and end of the arc $A_1^u(v)$ (i.e., $a$ is the intersection of $A_1^u(v)$ and the start-ray of  $C_1^u$ and $b$ is the intersection of $A_1^u(v)$ and  the end-ray of $C_1^u$). Let $F_1^u(v)$ be the fan-shaped region enclosed by $ua$, $ub$ and $A_1^u(v)$. See Figure~\ref{fig:main} for an illustration. It is easy to verify that $u$ is in the third cone of $v$, i.e., $u \in C_3^v$. Similarly, let $A_3^v(u)$ be the arc centered at $v$ with radius $||uv||$ that spans cone $C_3^v$. Let $c$ and $d$ be the start and end of the arc $A_3^v(u)$. Let $F_3^v(u)$ be the fan-shaped region enclosed by $vc$, $vd$ and $A_3^v(u)$.

\begin{figure}[tbhp]
\begin{center}
\begin{pspicture}(8,9)
  \pnode(2,3){u}\uput[180](u){$u$}

\pswedge[fillcolor=lightgray,linewidth=0pt,linecolor=lightgray,fillstyle=solid,origin={u}](u){5}{0}{72}

  \pnode(8,3){x}
  \pnode(7,3){a}\uput[-90](a){$a$}

  \pnode([nodesep=5,angle=12]{x}u){v}\uput[0](v){$v$}
\pswedge[fillcolor=lightgray,linewidth=0pt,linecolor=lightgray,fillstyle=solid,origin={v}](v){5}{144}{216}

  \pnode([nodesep=6,angle=72]{x}u){y}
  \pnode([nodesep=5,angle=72]{x}u){b}\uput[150](b){$b$}

  \pnode([nodesep=4.4,angle=50]{v}u){w}\uput[20](w){$w$}

  \pnode([nodesep=6,angle=36]{x}u){z}

  \psline[linewidth=0.2pt]{->}(u)(x)
  \psline[linewidth=0.2pt]{->}(u)(y)
  \psline[linestyle=dotted]{->}(u)(z)
  \pnode([nodesep=5.3,angle=50]{x}u){zz}\uput[90](zz){$C_1^u$}


  \psline[linestyle=dashed](u)(v)

\pnode([nodesep=6,angle=-12]{u}v){x1}
\pnode([nodesep=6,angle=36]{x1}v){yy}
\pnode([nodesep=6,angle=-36]{x1}v){xx}
  \psline[linewidth=0.2pt]{->}(v)(yy)
  \psline[linewidth=0.2pt]{->}(v)(xx)

  \psline[linestyle=dotted]{->}(v)(x1)

\pnode([nodesep=5.3,angle=-26]{u}v){xx1}\uput[90](xx1){$C_3^v$}

  \pnode([nodesep=5,angle=36]{x1}v){d}\uput[-60](d){$d$}
  \pnode([nodesep=5,angle=-36]{x1}v){c}\uput[90](c){$c$}

  \pnode([nodesep=4.6]{c}v){s}\uput[90](s){$s$}


  \pnode([nodesep=4.8,angle=-43]{u}v){z}\uput[180](z){$z$}

  \pnode([nodesep=3.84]{z}v){t}\uput[120](t){$t$}

  \psline[linecolor=red](u)(w)

  \psline[linecolor=red](v)(z)

  \psline[linecolor=red,linestyle=dashed](u)(z)
  \psline[linecolor=red,linestyle=dashed](v)(w)
  \psline[linecolor=red,linestyle=dashed](z)(w)

  \pstMarkAngle[MarkAngleRadius=0.4, linewidth=0.5pt, LabelSep=0.6]{u}{t}{v}{$\varphi$}
  \pstMarkAngle[MarkAngleRadius=0.6, linewidth=0.5pt, LabelSep=0.8]{v}{u}{w}{$\alpha$}
  \pstMarkAngle[MarkAngleRadius=0.6, linewidth=0.5pt, LabelSep=0.8]{z}{v}{u}{$\beta$}

  \psdot(u)
\psdot(v)
\psdot(z)
\psdot(w)

\end{pspicture}
\caption{Illustration for the proof of Theorem~\ref{thm:main}.}\label{fig:main}
\end{center}
\end{figure}

We can assume that $u,v$ is not connected in $G$ because otherwise $d(u,v) =||uv|| \leq \rho\cdot||uv||$. Therefore, there exists a point $w \in F_1^u(v)$ such that $uw \in G$ and a point $z \in F_3^v(u)$ such that $zv \in G$. Let $$\alpha = |\angle vuw| \mbox{~~~and~~~} \beta = |\angle zvu|.$$  Let $s$ be the intersection of the rays $\overrightarrow{ub}$ and $\overrightarrow{vc}$ and let $t$ be the intersection of the rays $\overrightarrow{uw}$ and $\overrightarrow{vz}$. See Figure~\ref{fig:main} for an illustration.  It is easy to see that $|\angle usv|= 2\pi/5$ because $\overrightarrow{us}$ and $\overrightarrow{vs}$ are the boundaries of the cones $C_1^u$ and $C_3^v$ respectively. Let $\varphi=|\angle utv|$. Then
\begin{align}
\varphi=|\angle utv| = \pi - \alpha-\beta \geq \pi-|\angle vus| - |\angle svu| = |\angle usv|= 2\pi/5.\label{philower}
\end{align}

Since $\alpha+\beta = \pi-\varphi \leq \pi-2\pi/5=3\pi/5$, we have \begin{align}\min(\alpha, \beta) \leq 3\pi/10.\label{beta}
\end{align}
Based on the simple observation of (\ref{beta}), one can apply Lemma~\ref{lem:bdd} to easily prove that the stretch factor of $Y_5$ is at most $\frac{1}{1-2\sin{(3\pi/20)}}\approx 10.87$, which is the same result obtained in an earlier version of this paper~\cite{early} and, independently, in~\cite{personal}. Here we apply a more careful analysis to obtain a tighter upper bound on the stretch factor of $Y_5$.

We consider three paths between $u$ and $v$:
\begin{enumerate}
\item $P_1$ consists of the edge $(u,w) \in G$ and the shortest path from $w$ to $v$. The length of $P_1$ is $|P_1| =||uw||+d(v,w)$.
\item $P_2$ consists of the edge $(v,z) \in G$ and the shortest path from $z$ to $u$. The length of $P_2$ is $|P_2| =||vz||+d(u,z)$.
\item $P_3$ consists of the edge $(u,w) \in G$, the shortest path from $w$ to $z$, and the edge $(z,v) \in G$. The length of $P_3$ is $|P_3| =||uw||+||vz||+d(z,w)$.
\end{enumerate}
Clearly, $d(u,v) \leq \min(|P_1|, |P_2|, |P_3|)$.

Define three values
\begin{align}
g_1&=||uw||+\rho||vw||,\\
g_2&=||vz||+\rho||uz||,\\
g_3&=||uw||+||vz||+\rho||zw||.
\end{align}
In order to prove the theorem, it suffices to prove that \begin{align}
\min(g_1,g_2,g_3) \leq \rho||uv||.\label{thm1:goal}
\end{align} Here is why: if $g_1=||uw||+\rho||vw||\leq \rho||uv||$, then $||vw|| < ||uv||$ and by the inductive hypothesis $d(v,w) \leq \rho||vw||$, which gives us $$|P_1| =||uw||+d(v,w)\leq ||uw||+\rho||vw|| \leq \rho||uv||.$$ Similarly, if $g_2 \leq \rho||uv||$ then $|P_2|\leq \rho||uv||$ and if $g_3 \leq \rho||uv||$ then $|P_3|\leq \rho||uv||$. In any of the these cases, we have $d(u,v) \leq \min(|P_1|, |P_2|, |P_3|) \leq \rho||uv||$ and the theorem is proven.

In the following, we will prove (\ref{thm1:goal}) using analysis and geometric observations. We start by bounding the values of $\alpha$ and $\beta$.

If $\alpha \leq \overline{\theta}$, then by Lemma~\ref{lem:bdd},
$$||uw||+\frac{1}{1-2\sin(\overline{\theta}/2)}\cdot||vw|| \leq \frac{1}{1-2\sin(\overline{\theta}/2)}\cdot||uv||.$$ Since  $\rho = \frac{1}{1-2\sin(\overline{\theta}/2)}$, this implies
\begin{align}
g_1&=||uw||+\rho||vw|| \leq \rho||uv||,
\end{align} and we are done. Similarly, if $\beta \leq \overline{\theta}$, then $g_2 = ||vz||+\rho||uz||\leq \rho||uv||$ and we are done.

Therefore we can assume $\alpha > \overline{\theta}$ and $\beta > \overline{\theta}$. Since $v$ is on or below the bisector of $C_1^u$, we have $|\angle auv| \leq \pi/5 < \overline{\theta}$ and $|\angle uvd| \leq \pi/5 < \overline{\theta}$. This implies that neither $z$ or $w$ is below the line $uv$. So we can assume that both $z$ and $w$ are above the line $uv$, as illustrated by Figure~\ref{fig:main}.

The following proposition plays a key role in this proof.
\begin{proposition}\label{prop:1}
If $g_1 > \rho||uv||$ and $g_2> \rho||uv||$, then $||wz|| \leq 2\cos\overline{\theta}-1.$
\end{proposition}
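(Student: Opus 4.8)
The plan is to assume $g_1 > \rho\|uv\|$ and $g_2 > \rho\|uv\|$ and extract quantitative information about the positions of $w$ and $z$, then combine these with the triangle $\triangle utv$ to bound $\|wz\|$. First I would observe that since $g_1 = \|uw\| + \rho\|vw\| > \rho\|uv\| = \rho$ and $\|uw\| \le \|uv\| = 1$ (because $w \in F_1^u(v)$), we must have $\|vw\| > \frac{\rho - \|uw\|}{\rho} \ge \frac{\rho-1}{\rho} = 1 - \frac1\rho = \cos\overline\theta$; in particular $w$ lies outside the disk of radius $\cos\overline\theta$ about $v$. Symmetrically, $g_2 > \rho$ forces $\|uz\| > \cos\overline\theta$. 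The contrapositive of Lemma~\ref{lem:bdd} applied to $u, v, w$ (with $\theta = \overline\theta$) is already captured by the case analysis above: since $g_1 > \rho\|uv\|$ we cannot have $\alpha \le \overline\theta$, so $\alpha > \overline\theta$, and likewise $\beta > \overline\theta$; these bounds are exactly what was established just before the proposition. The key extra leverage is that $\alpha + \beta = \pi - \varphi$ with $\varphi \ge 2\pi/5$, so $\alpha + \beta \le 3\pi/5$, which combined with $\alpha, \beta > \overline\theta$ pins both angles into the narrow window $(\overline\theta,\, 3\pi/5 - \overline\theta)$.

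The second and main step is to locate $w$ and $z$ precisely enough. Here I would use Lemma~\ref{lem:xy}: the hypothesis $g_1 > \rho\|uv\|$ says that $w$ does \emph{not} lie in the region where $\|uw\| + \rho\|vw\| \le \rho\|uv\|$, and Lemma~\ref{lem:xy} (with $a = u$, $b = v$, $c$ chosen on the ray $\overrightarrow{uw}$ at the critical distance $\frac{2\rho^2\cos\alpha - 2\rho}{\rho^2-1}$, and $\lambda = \rho$) describes a segment of points $d$ on $uc$ all satisfying $\|ud\| + \rho\|vd\| \le \rho\|uv\|$. Consequently $w$ must lie strictly beyond the endpoint $c$ of that segment along $\overrightarrow{uw}$, i.e., $\|uw\| > \frac{2\rho^2\cos\alpha - 2\rho}{\rho^2 - 1}$. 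Symmetrically, $\|vz\| > \frac{2\rho^2\cos\beta - 2\rho}{\rho^2-1}$. Now I would set up coordinates with $u$ at the origin and $v = (1,0)$, write $w$ as lying along the ray from $u$ at angle $\alpha$ (above $uv$) at distance $\|uw\| = r_w$, and $z$ along the ray from $v$ into $C_3^v$ at angle $\beta$ from $\overrightarrow{vu}$ at distance $\|vz\| = r_z$. Both rays pass through the point $t$ with $|\angle utv| = \varphi = \pi - \alpha - \beta \ge 2\pi/5 > \pi/3$, so $\|ut\|$ and $\|vt\|$ are determined by the law of sines in $\triangle utv$. The idea is that $w$ lies on segment $ut$ extended or its prefix — one checks $w$ and $z$ are on the same side, so $\|wz\|$ is governed by $\|wt\|$, $\|zt\|$ and the angle $\varphi$ at $t$ via the law of cosines $\|wz\|^2 = \|wt\|^2 + \|zt\|^2 - 2\|wt\|\|zt\|\cos\varphi$ (or $+2\cos\varphi$ depending on orientation), and then express $\|wt\| = \|ut\| - r_w$, $\|zt\| = \|vt\| - r_z$, which are small because $r_w, r_z$ are large by the bounds just derived.

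The third step is the optimization: having reduced $\|wz\|$ to an explicit function of the two free parameters $\alpha, \beta$ (with $r_w, r_z$ replaced by their lower bounds to get an upper bound on $\|wz\|$, and $\alpha + \beta$ eliminated in favor of $\varphi$), I would show this function is maximized at the boundary of the feasible region, most plausibly at $\alpha = \beta$ and $\varphi = 2\pi/5$ (the symmetric, extremal configuration), and that the maximum value equals exactly $2\cos\overline\theta - 1$. Since $\overline\theta = \arccos(\sqrt3 - 1)$, this target value is $2(\sqrt3 - 1) - 1 = 2\sqrt3 - 3 \approx 0.464$, a clean constant that one should be able to verify the extremal configuration produces. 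I would double-check the degenerate sub-case where $w = t$ or $z = t$ (when the critical distance meets $\|ut\|$ or $\|vt\|$), since then $\|wz\|$ collapses and the bound is trivial.

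\textbf{Expected main obstacle.} The hard part will be the optimization in the third step: the expression for $\|wz\|$ as a function of $(\alpha,\beta)$ after substituting the law-of-sines formulas for $\|ut\|, \|vt\|$ is a messy combination of trigonometric terms, and showing rigorously that its maximum over the triangle $\{\alpha > \overline\theta,\ \beta > \overline\theta,\ \alpha+\beta \le 3\pi/5\}$ is attained at the corner and equals $2\cos\overline\theta - 1$ will require either a careful monotonicity argument in each variable separately or a clever change of variables. A secondary subtlety is justifying that $w$ and $z$ lie on the \emph{segments} $ut$ and $vt$ respectively (rather than on the extensions beyond $t$) — this needs $r_w \le \|ut\|$ and $r_z \le \|vt\|$, which should follow from $w \in F_1^u(v)$ (so $\|uw\| \le 1$) together with $\|ut\| \ge 1$ when $\varphi$ is not too large, but the boundary behavior must be handled. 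I also want to be careful about the sign in the law of cosines at $t$: depending on whether $w, z$ are positioned so that the angle subtended is $\varphi$ or $\pi - \varphi$, which in turn depends on whether they lie between $u$ (resp. $v$) and $t$.
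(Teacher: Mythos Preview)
Your framework matches the paper's: from $g_1 > \rho$ and Lemma~\ref{lem:xy} you correctly deduce $\|uw\| > \frac{2\rho^2\cos\alpha - 2\rho}{\rho^2-1}$, and together with $\|uw\| \le 1$ this confines $w$ to a subsegment $w'w''$ of the ray $\overrightarrow{uw}$ (and likewise $z$ to a subsegment $z'z''$). But the plan goes wrong at the substitution step. You propose to replace $r_w, r_z$ by their \emph{lower} bounds to upper-bound $\|wz\|$; this is only valid when both $w$ and $z$ lie between $u,v$ and $t$. Your suggested justification, ``$\|ut\| \ge 1$ when $\varphi$ is not too large,'' is false: in $\triangle utv$ the side $\|uv\|=1$ is opposite the angle $\varphi$, and since $\varphi \ge 2\pi/5 > 3\pi/5 - \overline\theta > \beta$ one always has $\|ut\| < 1$ here, so $w$ (with $r_w$ ranging up to $1$) can lie \emph{beyond} $t$. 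When both $w$ and $z$ do, the angle $\angle wtz$ is $\pi - \varphi$ rather than $\varphi$, and $\|wz\|$ \emph{increases} with $r_w, r_z$; the extremal is then at the upper bounds $r_w = r_z = 1$, not the lower bounds.

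In fact the tight configuration is not the one you guess ($\alpha = \beta = 3\pi/10$ with $r_w, r_z$ at their lower bounds) but $\alpha = \beta = \overline\theta$ with $r_w = r_z = 1$: in coordinates with $u=(0,0)$, $v=(1,0)$ this gives $w = (\cos\overline\theta, \sin\overline\theta)$, $z = (1-\cos\overline\theta, \sin\overline\theta)$, and $\|wz\| = 2\cos\overline\theta - 1$ on the nose. The paper deals with this by first noting (convexity of the distance) that $\|wz\| \le \max(\|w'z'\|, \|w'z''\|, \|w''z'\|, \|w''z''\|)$, so all four corners must be examined, and then splitting into two cases according to whether the segments $uw$ and $vz$ cross. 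The crossing case---exactly the regime where your plan fails---is handled by a short geometric argument (containment of $wz$ in $\triangle tw''z''$, then rotating $\alpha,\beta$ down to $\overline\theta$); the non-crossing case is closer to what you outline but still requires all four corners, and the paper reduces it to a one-variable monotonicity computation by first rotating to $\alpha + \beta = 3\pi/5$.
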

\begin{proof}
Let $w', w''$ be two points in the ray $\overrightarrow{uw}$ such that $$||uw'||=\frac{2\rho^2\cos\alpha -2\rho}{\rho^2-1} \mbox{~~~and~~~} ||uw''| = 1.$$ By Lemma~\ref{lem:xy}, if $||uw|| \leq ||uw'||$ then $g_1 = ||uw||+\rho||vw|| \leq \rho||uv||$. So we can assume $w$ is in the line segment $w'w''$. See Figure~\ref{fig:frac-1}.

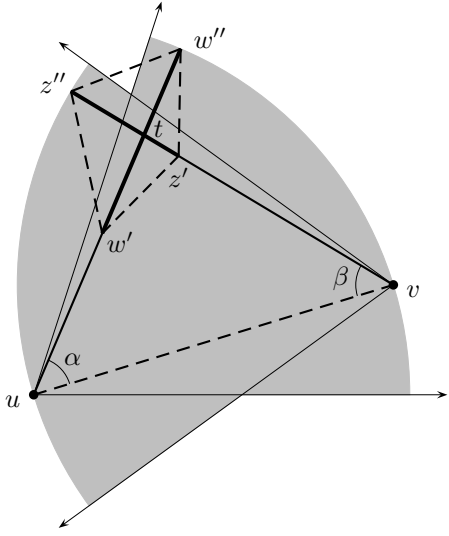
\begin{figure}[tbhp]
\begin{center}
\begin{pspicture}(-2,-2)(6,6)

\pnode(0,0){O}
\pnode(4.78153, 1.46186){v'}

\pswedge[fillcolor=lightgray,linewidth=0pt,linecolor=lightgray,fillstyle=solid,origin={O}](O){5}{0}{72}
\pswedge[fillcolor=lightgray,linewidth=0pt,linecolor=lightgray,fillstyle=solid,origin={v'}](v'){5}{144}{216}
	
	\pnode([nodesep=5.5]{O}O){xu}
	\psline[linewidth=0.2pt,arrowsize=0.1]{->}(O)(xu)
	\pnode([nodesep=5.5,angle=72]{O}O){yu}
	\psline[linewidth=0.2pt,arrowsize=0.1]{->}(O)(yu)
	
	\pnode([nodesep=5.5,angle=144]{v'}v'){xv'}
	\psline[linewidth=0.2pt,arrowsize=0.1]{->}(v')(xv')
	\pnode([nodesep=5.5,angle=216]{v'}v'){yv'}
	\psline[linewidth=0.2pt,arrowsize=0.1]{->}(v')(yv')

\rput[l]{17}(0,0){

	\pnode(0,0){u}
	\pnode(5,0){v}
	\pnode([nodesep=5,angle=50]{v}u){w''}
	\pnode([nodesep=5,angle=-48]{u}v){z''}
	\pnode([nodesep=2.333,angle=0]{w''}u){w'}
	\pnode([nodesep=3.33,angle=0]{z''}v){z'}
	
	\pnode([nodesep=3.75,angle=0]{w''}u){t}
	
	\psline[linestyle=dashed](u)(v)
	\psline(u)(w'')
	\psline(v)(z'')
	\psline[linestyle=dashed](z')(w')
	\psline[linestyle=dashed](w')(z'')
	\psline[linestyle=dashed](z'')(w'')
	\psline[linestyle=dashed](w'')(z')
	
	\psline[linewidth=1.5pt](w'')(w')
	\psline[linewidth=1.5pt](z'')(z')

}

\psdot(u
)
\psdot(v
)

	\uput[200](u){$u$}
	\uput[-20](v){$v$}
	\uput[25](w''){$w''$}
	\uput{0.06}[150](z''){$z''$}
	\uput{0.06}[-30](w'){$w'$}
	\uput{0.15}[-90](z'){$z'$}
	\uput{0.13}[20](t){$t$}
	
	\pstMarkAngle[MarkAngleRadius=0.5, linewidth=0.2pt, LabelSep=0.7]{v}{u}{w''}{$\alpha$}
	\pstMarkAngle[MarkAngleRadius=0.5, linewidth=0.2pt, LabelSep=0.7]{z''}{v}{u}{$\beta$}


\end{pspicture}
\caption{Illustration for the proof of Proposition~\ref{prop:1}. $w$ and $z$ are in the line segments $w'w''$ and $z'z''$, respectively.}\label{fig:frac-1}
\end{center}
\end{figure}

Similarly, let $z', z''$ be two points in the ray $\overrightarrow{vz}$ such that $$||vz'||=\frac{2\rho^2\cos\beta -2\rho}{\rho^2-1} \mbox{~~~and~~~} ||vz''| = 1.$$ Since $g_2> \rho||uv||$, we can assume $z$ is in the line segment $z'z''$.

By linearity, we have $$||wz|| \leq \max(||w'z'||,||w'z''||,||w''z'||,||w''z''||).$$
By the law of sines in the triangle $\triangle uvt$, we have (recall that we assume $||uv||$ = 1):
$$||ut|| = \frac{\sin\beta}{\sin(\alpha+\beta)}  \mbox{~~~and~~~} ||vt|| = \frac{\sin\alpha}{\sin(\alpha+\beta)}.$$ See Figure~\ref{fig:frac-1} for illustration.

We continue by distinguishing two cases.

\vspace*{0.5cm}
\noindent{\bf Case 1.} First consider the case where $uw$ and $vs$ cross each other. See Figure~\ref{fig:propcase1} (a). In this case, since $wz$ is a line segment in the triangle $\triangle tw''z''$, we have $||wz|| \leq \max(||tw''||, ||tz''||, ||w''z''||)$. Since $\alpha \geq  \overline{\theta}$, $\beta \geq  \overline{\theta}$ and $\sin(\alpha+\beta) \leq 1$, we have $||tw''|| = ||uw''| - ||ut|| = 1-\frac{\sin\beta}{\sin(\alpha+\beta)} \leq 1-\sin\overline{\theta} < 2\cos\overline{\theta}-1$ and $||tz''|| = 1-\frac{\sin\alpha}{\sin(\alpha+\beta)} \leq 1-\sin\overline{\theta} < 2\cos\overline{\theta}-1.$ Now consider $||w''z''||$. It is easy to see that $||w''z''||$ increases when we fix $vz''$ and rotate $uw''$ clockwise around $u$ until $\alpha = \overline{\theta}$. Similarly, $||w''z''||$ increases when we fix $uw''$ and rotate $vz''$ counterclockwise around $v$ until $\beta = \overline{\theta}$. Therefore $||w''z''||$ is maximized when $\alpha=\beta=\overline{\theta}$. See Figure~\ref{fig:propcase1} (b). In this case it is a simple calculation based on the geometry to verify that $||w''z''|| = 2\cos\overline{\theta}-1$.


\begin{figure}[tbhp]
\begin{center}
\begin{pspicture}(-2,-2)(6,6)

\pnode(0,0){O}
\pnode(4.78153, 1.46186){v'}

\pswedge[fillcolor=lightgray,linewidth=0pt,linecolor=lightgray,fillstyle=solid,origin={O}](O){5}{0}{72}
\pswedge[fillcolor=lightgray,linewidth=0pt,linecolor=lightgray,fillstyle=solid,origin={v'}](v'){5}{144}{216}
	
	\pnode([nodesep=5.5]{O}O){xu}
	\psline[linewidth=0.2pt,arrowsize=0.1]{->}(O)(xu)
	\pnode([nodesep=5.5,angle=72]{O}O){yu}
	\psline[linewidth=0.2pt,arrowsize=0.1]{->}(O)(yu)
	
	\pnode([nodesep=5.5,angle=144]{v'}v'){xv'}
	\psline[linewidth=0.2pt,arrowsize=0.1]{->}(v')(xv')
	\pnode([nodesep=5.5,angle=216]{v'}v'){yv'}
	\psline[linewidth=0.2pt,arrowsize=0.1]{->}(v')(yv')

\rput[l]{17}(0,0){

	\pnode(0,0){u}
	\pnode(5,0){v}
	\pnode([nodesep=5,angle=50]{v}u){w''}
	\pnode([nodesep=5,angle=-48]{u}v){z''}
	\pnode([nodesep=4.67,angle=0]{z''}v){z}
	\pnode([nodesep=4.29,angle=0]{w''}u){w}

	\pnode([nodesep=3.75,angle=0]{w''}u){t}
	
	\pnode([nodesep=5.5,angle=43]{v}u){w'''}
	\pnode([nodesep=5.5,angle=-43]{u}v){z'''}
	\psline[linewidth=0.2pt](u)(w''')
	\psline[linewidth=0.2pt](v)(z''')
	\pstMarkAngle[arrows=<-, arrowsize=0.15, MarkAngleRadius=5.13, linewidth=0.2pt]{w'''}{u}{w''}{}
	\pstMarkAngle[arrows=->, arrowsize=0.15, MarkAngleRadius=5.13, linewidth=0.2pt]{z''}{v}{z'''}{}
	
	\psline[linestyle=dashed](u)(v)
	\psline(u)(w'')
	\psline(v)(z'')
	\psline[linestyle=dashed](z'')(w'')
	\psline[linestyle=dashed](z)(w)
	\psline[linewidth=1.5pt](u)(w)
	\psline[linewidth=1.5pt](v)(z)

}
	
\psdot(u
)
\psdot(v
)
\psdot(w
)
\psdot(z
)

	\uput[200](u){$u$}
	\uput[-20](v){$v$}
	\uput{0.13}[90](w''){$w''$}
	\uput{0.13}[90](z''){$z''$}
	\uput{0.1}[-20](w){$w$}
	\uput{0.1}[205](z){$z$}
	\uput{0.15}[-70](t){$t$}

	\pstMarkAngle[MarkAngleRadius=0.5, linewidth=0.2pt, LabelSep=0.7]{v}{u}{w'''}{$\overline{\theta}$}
	\pstMarkAngle[MarkAngleRadius=0.5, linewidth=0.2pt, LabelSep=0.7]{z'''}{v}{u}{$\overline{\theta}$}
	\pstMarkAngle[MarkAngleRadius=1.0, linewidth=0.2pt, LabelSep=1.2]{v}{u}{w''}{$\alpha$}
	\pstMarkAngle[MarkAngleRadius=1.0, linewidth=0.2pt, LabelSep=1.2]{z''}{v}{u}{$\beta$}

\uput[-70](2.5,-1.5){(a)}

\end{pspicture}
\begin{pspicture}(-2,-2)(5,5)

\rput{18}(0,0){

	\pnode(0,0){u}
	\pnode(5,0){v}
	\pnode([nodesep=5,angle=43]{v}u){w}
	\pnode([nodesep=5,angle=-43]{u}v){z}
	
}
	
	\psline[linestyle=dashed](u)(v)
	\psline(u)(w)
	\psline(v)(z)
	\psline[linestyle=dashed](u)(z)(w)(v)
	
	\psline[linewidth=1.5pt](u)(w)
	\psline[linewidth=1.5pt](v)(z)


\psdot(u
)
\psdot(v
)

	\uput[200](u){$u$}
	\uput[-20](v){$v$}
	\uput[20](w){$w''$}
	\uput[160](z){$z''$}
	
	\pstMarkAngle[MarkAngleRadius=0.5, linewidth=0.5pt, LabelSep=0.8]{v}{u}{w}{$\overline{\theta}$}
	\pstMarkAngle[MarkAngleRadius=0.5, linewidth=0.5pt, LabelSep=0.8]{z}{v}{u}{$\overline{\theta}$}

\uput[-70](2.5,-1.5){(b)}

\end{pspicture}
\caption{Illustration for the case 1 of the proof of Proposition~\ref{prop:1}. (a) illustrates the rotation. (b) shows that $||w''z''||$ is maximized when $\alpha=\beta=\overline{\theta}$.}\label{fig:propcase1}
\end{center}
\end{figure}
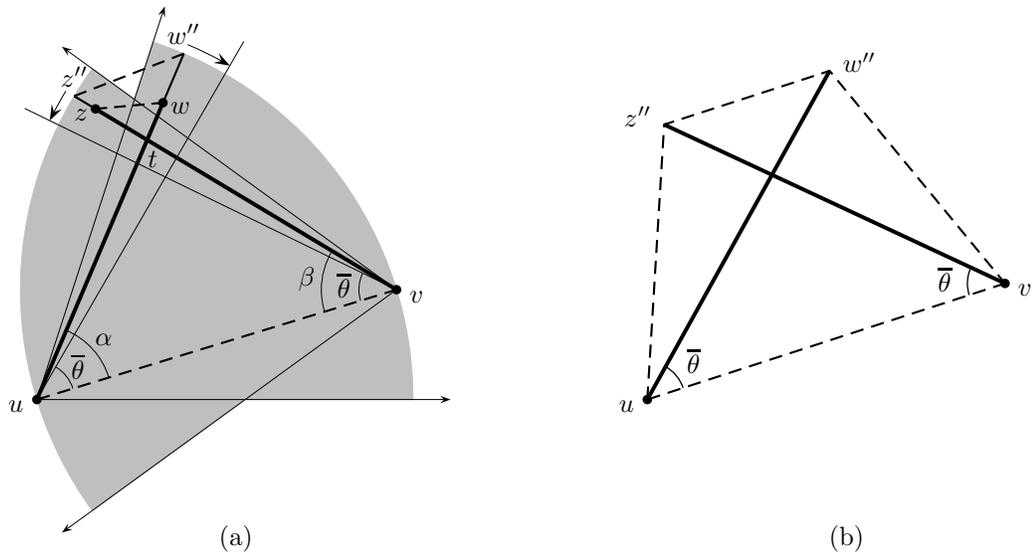



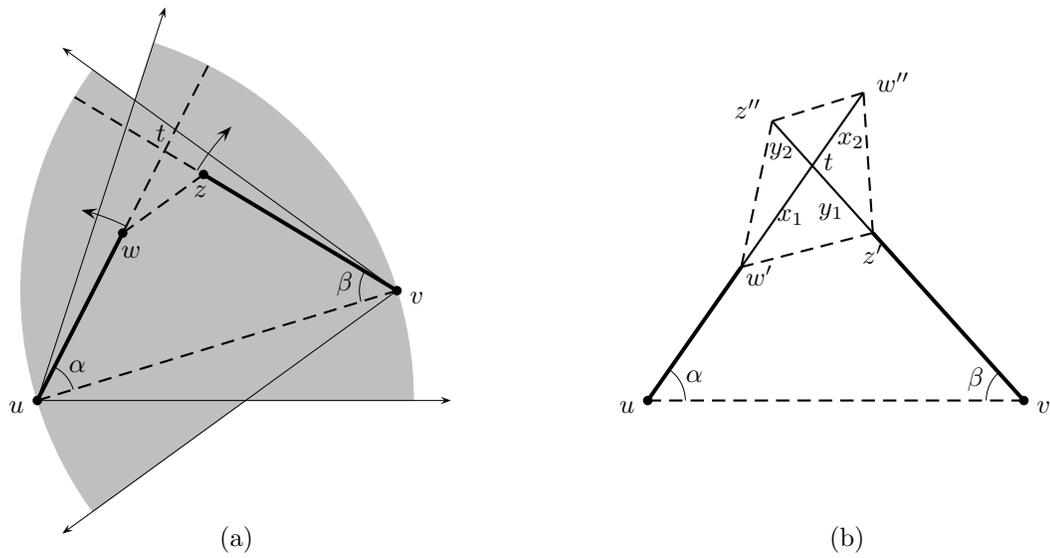
\begin{figure}[tbhp]
\begin{center}
\begin{pspicture}(-2,-2)(6,6)

\pnode(0,0){O}
\pnode(4.78153, 1.46186){v'}

\pswedge[fillcolor=lightgray,linewidth=0pt,linecolor=lightgray,fillstyle=solid,origin={O}](O){5}{0}{72}
\pswedge[fillcolor=lightgray,linewidth=0pt,linecolor=lightgray,fillstyle=solid,origin={v'}](v'){5}{144}{216}
	
	\pnode([nodesep=5.5]{O}O){xu}
	\psline[linewidth=0.2pt,arrowsize=0.1]{->}(O)(xu)
	\pnode([nodesep=5.5,angle=72]{O}O){yu}
	\psline[linewidth=0.2pt,arrowsize=0.1]{->}(O)(yu)
	
	\pnode([nodesep=5.5,angle=144]{v'}v'){xv'}
	\psline[linewidth=0.2pt,arrowsize=0.1]{->}(v')(xv')
	\pnode([nodesep=5.5,angle=216]{v'}v'){yv'}
	\psline[linewidth=0.2pt,arrowsize=0.1]{->}(v')(yv')

\rput[l]{17}(0,0){

	\pnode(0,0){u}
	\pnode(5,0){v}
	\pnode([nodesep=5,angle=46]{v}u){w''}
	\pnode([nodesep=5,angle=-48]{u}v){z''}
	
	\pnode([nodesep=2.5,angle=0]{w''}u){w}
	\pnode([nodesep=3.0,angle=0]{z''}v){z}

	\pnode([nodesep=3.7,angle=0]{w''}u){t}

\pnode([nodesep=5.5,angle=60]{O}O){yuu}	\pstMarkAngle[arrows=->, arrowsize=0.15, MarkAngleRadius=2.6, linewidth=0.2pt]{w}{u}{yuu}{}
	
\pnode([nodesep=5.5,angle=65]{O}O){yvv}	\pstMarkAngle[arrows=<-, arrowsize=0.15, MarkAngleRadius=3.1, linewidth=0.2pt]{yvv}{v}{z}{}

	\psline[linestyle=dashed](u)(v)
	\psline[linestyle=dashed](w)(z)
	\psline[linestyle=dashed](u)(w'')
	\psline[linestyle=dashed](v)(z'')	
	
	\psline[linewidth=1.5pt](u)(w)
	\psline[linewidth=1.5pt](v)(z)

}


\psdot(u
)
\psdot(v
)
\psdot(z
)
\psdot(w
)

	\uput[200](u){$u$}
	\uput[-20](v){$v$}
	\uput[-70](w){$w$}
	\uput{0.15}[-100](z){$z$}
	
	\uput{0.15}[100](t){$t$}
	
	\pstMarkAngle[MarkAngleRadius=0.5, linewidth=0.2pt, LabelSep=0.7]{v}{u}{w''}{$\alpha$}
	\pstMarkAngle[MarkAngleRadius=0.5, linewidth=0.2pt, LabelSep=0.7]{z''}{v}{u}{$\beta$}

\uput[-70](2.5,-1.5){(a)}

\end{pspicture}
\begin{pspicture}(-2,-2)(6,6)


	
	


	\pnode(0,0){u}
	\pnode(5,0){v}
	\pnode([nodesep=5,angle=55]{v}u){w''}
	\pnode([nodesep=5,angle=-48]{u}v){z''}
	\pnode([nodesep=2.167,angle=0]{w''}u){w'}
	\pnode([nodesep=3.0,angle=0]{z''}v){z'}
	
	\pnode([nodesep=3.82,angle=0]{w''}u){t}
	
	
	\psline[linestyle=dashed](u)(v)
	\psline(u)(w'')
	\psline(v)(z'')
	\psline[linestyle=dashed](z')(w')
	\psline[linestyle=dashed](w')(z'')
	\psline[linestyle=dashed](z'')(w'')
	\psline[linestyle=dashed](w'')(z')
	
	\psline[linewidth=1.5pt](u)(w')
	\psline[linewidth=1.5pt](v)(z')


\psdot(u
)
\psdot(v
)

	\uput[200](u){$u$}
	\uput[-20](v){$v$}
	\uput[25](w''){$w''$}
	\uput{0.15}[150](z''){$z''$}
	\uput{0.06}[-30](w'){$w'$}
	\uput{0.15}[-90](z'){$z'$}
	\uput{0.15}[10](t){$t$}
	
	\uput{0.6}[-111](t){$x_1$}
	\uput{0.4}[37](t){$x_2$}
	\uput{0.52}[-70](t){$y_1$}
	\uput{0.26}[152](t){$y_2$}
	
	\pstMarkAngle[MarkAngleRadius=0.5, linewidth=0.2pt, LabelSep=0.7]{v}{u}{w''}{$\alpha$}
	\pstMarkAngle[MarkAngleRadius=0.5, linewidth=0.2pt, LabelSep=0.7]{z''}{v}{u}{$\beta$}

\uput[-70](2.5,-1.5){(b)}

\end{pspicture}
\caption{Illustration for case 2 of Proposition~\ref{prop:1}}\label{fig:propcase2}
\end{center}
\end{figure}


\vspace*{0.5cm}
\noindent{\bf Case 2.} Now assume that $uw$ and $vs$ do not cross each other. See Figure~\ref{fig:propcase2} (a). In this case either $||uw|| < ||ut||$ or $||vz|| < ||vt||$ or both. If $||vz|| < ||vt||$, then $||wz||$ increases when we fix $vz$ and rotate $\overrightarrow{uw}$ counterclockwise around $u$ until $\alpha = 3\pi/5-\beta$. Otherwise we have $||uw|| < ||ut||$; then $||wz||$ increases when we fix $uw$ and rotate $\overrightarrow{vz}$ clockwise around $v$ until $\beta = 3\pi/5-\alpha$. Note that in the above rotating process, it is possible for $\overrightarrow{uw}$ or $\overrightarrow{vz}$ to go beyond the boundaries of the cones $C_1^u$ or $C_3^v$ respectively, but this is not a problem because we only need to bound $||wz||$ in this proposition and going beyond the boundaries of the cones does not affect the discussion that follows. So in either case, we can assume $\alpha+\beta = 3\pi/5$.

Since $||uw'|| = \frac{2\rho^2\cos\alpha -2\rho}{\rho^2-1}$ decreases when  $\alpha$ increases, $w$ is still in the line segment $w'w''$ after rotation. Similarly, $z$ is in the line segment $z'z''$ after rotation. This means that $$||wz|| \leq \max(||w'z'||,||w'z''||,||w''z'||,||w''z''||)$$ still holds after the rotation. See Figure~\ref{fig:propcase2}~(b). Without loss of generally, assume that $\alpha \geq \beta$. Therefore $3\pi/10 \leq \alpha \leq 3\pi/5-\overline{\theta}$ and $\overline{\theta} \leq \beta \leq 3\pi/10$. Let $c_1 = \frac{2\rho^2}{\rho^2-1}$ and $c_2=\frac{1}{\sin(3\pi/5)}$. We have
    \begin{align}
\frac{d ||uw'||}{d \alpha} &= \frac{d (\frac{2\rho^2\cos\alpha -2\rho}{\rho^2-1})}{d \alpha} = \frac{-2\rho^2\sin\alpha}{\rho^2-1} = -c_1\sin\alpha,\label{firstd}\\
\frac{d ||vz'||}{d \alpha} &= \frac{d (\frac{2\rho^2\cos\beta -2\rho}{\rho^2-1})}{d \alpha} = \frac{d (\frac{2\rho^2\cos(3\pi/5-\alpha) -2\rho}{\rho^2-1})}{d \alpha} =\frac{2\rho^2\sin(3\pi/5-\alpha)}{\rho^2-1} = c_1\sin(3\pi/5-\alpha),\\
\frac{d ||ut||}{d \alpha} &= \frac{d (\frac{\sin\beta}{\sin(\alpha+\beta)})}{d \alpha} = \frac{d (\frac{\sin(3\pi/5-\alpha)}{\sin(3\pi/5)})}{d \alpha} =  \frac{-\cos(3\pi/5-\alpha)}{\sin(3\pi/5)} = -c_2\cos(3\pi/5-\alpha),\\
\frac{d ||vt||}{d \alpha} &= \frac{d (\frac{\sin\alpha}{\sin(\alpha+\beta)})}{d \alpha} = \frac{d (\frac{\sin\alpha}{\sin(3\pi/5)})}{d \alpha} = \frac{\cos\alpha}{\sin(3\pi/5)} = c_2\cos\alpha.\label{lastd}
\end{align}
Let
\begin{align}
x_1 &= ||ut|| - ||uw'|| = \frac{\sin\beta}{\sin(\alpha+\beta)} - \frac{2\rho^2\cos\alpha -2\rho}{\rho^2-1} = \frac{\sin(3\pi/5-\alpha)}{\sin(3\pi/5)} - \frac{2\rho^2\cos\alpha -2\rho}{\rho^2-1},\label{x1}\\
x_2 &= ||uw''|| - ||ut|| = 1- \frac{\sin\beta}{\sin(\alpha+\beta)} = 1- \frac{\sin(3\pi/5-\alpha)}{\sin(3\pi/5)},\label{x2}\\
y_1 &= ||vt|| - ||vz'|| = \frac{\sin\alpha}{\sin(\alpha+\beta)} - \frac{2\rho^2\cos\beta -2\rho}{\rho^2-1} = \frac{\sin\alpha}{\sin(3\pi/5)} - \frac{2\rho^2\cos\beta -2\rho}{\rho^2-1},\label{y1}\\
y_2 &= ||vz''|| - ||vt|| = 1 - \frac{\sin\alpha}{\sin(\alpha+\beta)} = 1- \frac{\sin\alpha}{\sin(3\pi/5)}\label{y2}.
\end{align}
Note that the values of $x_1$ and $y_1$ can be positive or negative. From (\ref{firstd}) - (\ref{lastd}), we have
\begin{align}
\frac{d x_1}{d \alpha} &= \frac{d(||ut|| - ||uw'||)}{d \alpha} = -c_2\cos(3\pi/5-\alpha) + c_1\sin\alpha,\label{d1}\\
\frac{d x_2}{d \alpha} &= \frac{d(||uw''|| - ||ut||)}{d \alpha} = \frac{d(1 - ||ut||)}{d \alpha} = c_2\cos(3\pi/5-\alpha),\\
\frac{d y_1}{d \alpha} &= \frac{d(||vt|| - ||vz'||)}{d \alpha} = c_2\cos\alpha - c_1\sin(3\pi/5-\alpha),\\
\frac{d y_2}{d \alpha} &= \frac{d(||vz''|| - ||vt||)}{d \alpha} = \frac{d(1 - ||vt||)}{d \alpha} = -c_2\cos\alpha.\label{d4}
\end{align}
Recall that $c_1 = \frac{2\rho^2}{\rho^2-1}$, $c_2=\frac{1}{\sin(3\pi/5)}$, and $3\pi/10 \leq \alpha \leq 3\pi/5-\overline{\theta}$, we verify the following:
\begin{align}
\frac{d^2 x_1}{d \alpha^2} &= -c_2\sin(3\pi/5-\alpha)+c_1\cos\alpha > - 1.1 \cdot\sin(3\pi/10)+2.1 \cdot\cos(3\pi/5-\overline{\theta})  > 0,\\
\frac{d^2 x_2}{d \alpha^2} &= c_2\sin(3\pi/5-\alpha) > 0,\\
\frac{d^2 y_1}{d \alpha^2} &= -c_2\sin\alpha+c_1\cos(3\pi/5-\alpha) > - 1.1 \cdot\sin(3\pi/5-\overline{\theta})+2.1 \cdot\cos(3\pi/10)  > 0,\\
\frac{d^2 y_2}{d \alpha^2} &= c_2\sin\alpha > 0.\\
\end{align}
Therefore, by plugging $\alpha = 3\pi/10$ or $\alpha = 3\pi/5-\overline{\theta}$ as the lower- or upper-bound of $\alpha$ into (\ref{d1})-(\ref{d4}), we can verify the following ranges:
\begin{align}
-c_2\cos(3\pi/10) + c_1\sin(3\pi/10) \leq & \frac{d x_1}{d \alpha} \leq -c_2\cos\overline{\theta} + c_1\sin(3\pi/5-\overline{\theta}),\\
c_2\cos(3\pi/10) \leq & \frac{d x_2}{d \alpha} \leq c_2\cos\overline{\theta},\label{dx2}\\
c_2\cos(3\pi/10) - c_1\sin(3\pi/10) \leq & \frac{d y_1}{d \alpha} \leq c_2\cos(3\pi/5-\overline{\theta})-c_1\sin\overline{\theta},\\
-c_2\cos(3\pi/10) \leq & \frac{d y_2}{d \alpha} \leq -c_2\cos(3\pi/5-\overline{\theta}).
\end{align}

Specifically, we can verify that
\begin{align}
\frac{d x_1}{d \alpha} \geq \max(\frac{d x_2}{d \alpha},\left|\frac{d y_1}{d \alpha}\right|,\left|\frac{d y_2}{d \alpha}\right|),\label{compared}
\end{align}
which implies
$\frac{d (x_1-x_2)}{d \alpha} = \frac{d x_1}{d \alpha}-\frac{d x_2}{d \alpha} > 0$. By simply plugging $\alpha = 3\pi/10$ into (\ref{x1}) and (\ref{x2}), we verify that $(x_1-x_2) > 0$ when $\alpha = 3\pi/10$ and hence $x_1 > x_2$ for all $\alpha \in [3\pi/10, 3\pi/5-\overline{\theta}]$. Similarly, we have $x_2 > 0$ when $\alpha = 3\pi/10$, and hence by (\ref{dx2}), $x_2 > 0$ for all $\alpha \in [3\pi/10, 3\pi/5-\overline{\theta}]$. Now we have $x_1 > x_2 > 0$.

By the triangle inequality,
\begin{align}
||w'z'|| &\leq ||tw'||+||tz'|| = |x_1| + |y_1| = x_1 + |y_1|,\\
||w'z''|| &\leq ||tw'||+||tz''|| = |x_1| + |y_2| = x_1 + |y_2|,\\
||w''z'|| &\leq ||tw''||+||tz'|| = |x_2| + |y_1| = x_2 + |y_1|\leq x_1 + |y_1|,\\
||w''z''|| &\leq ||tw''||+||tz''|| = |x_2| + |y_2| = x_2 + |y_2|\leq x_1 + |y_2|.
\end{align}
By (\ref{compared}),
\begin{align}
\frac{d (x_1 + |y_1|)}{d \alpha} & \geq \frac{d (x_1)}{d \alpha} - \left|\frac{d (y_1)}{d \alpha}\right| \geq 0,\\
\frac{d (x_1 + |y_2|)}{d \alpha} & \geq \frac{d (x_1)}{d \alpha} - \left|\frac{d (y_2)}{d \alpha}\right| \geq 0.
\end{align}
By plugging $\alpha = 3\pi/5-\overline{\theta}$ into (\ref{x1}), (\ref{y1}), and (\ref{y2}), one can easily verify that $x_1 + |y_1| \leq 2\cos\overline{\theta}-1$ and $x_1 + |y_2| \leq 2\cos\overline{\theta}-1$ when $\alpha = 3\pi/5-\overline{\theta}$ (i.e., when $\alpha$ is maximized). Therefore  $\max(x_1 + |y_1|, x_1 + |y_2|) \leq 2\cos\overline{\theta}-1$ for all $\alpha \in [3\pi/10, 3\pi/5-\overline{\theta}]$, and hence $||wz|| \leq \max(||w'z'||,||w'z''||,||w''z'||,||w''z''||) \leq 2\cos\overline{\theta}-1 $ as required.

This proves that $||wz|| \leq 2\cos\overline{\theta}-1.$

\end{proof}

The theorem follows immediately from Proposition~\ref{prop:1}: If $g_1 \leq \rho||uv||$ or $g_2 \leq \rho||uv||$, then we are done; otherwise by Proposition~\ref{prop:1} $$g_3 =||uw||+||vz||+\rho||zw|| \leq 1+1+\rho(2\cos\overline{\theta}-1) = \rho,$$ since $\cos\overline{\theta} = 1-\frac{1}{\rho}$. Therefore we have $\min(g_1,g_2,g_3) \leq \rho$, as required. This completes the proof of the main theorem.

\section{A lower bound on the stretch factor of $Y_5$}\label{section:lower_bound}

\input{fig-frac-4}

The preceding inductive proof of the upper bound on the stretch factor of $Y_5$ suggests a possible construction that gives a lower bound of the stretch factor of $Y_5$. It is based on recursively attaching the ``lattice'' as shown in Figure~\ref{fig:propcase1} (b) to pairs of non-adjacent points (e.g., pairs $\{u,z\}, \{z,w\}, \{w,v\}$ in Figure~\ref{fig:propcase1} (b)). This recursion-based construction results in a ``fractal'' starting from the pair $\{u,v\}$. See Figure~\ref{fig:frac-4} (a). However, the growth of fractal is limited because neighboring fractal branches collide into each other, thereby creating shortcuts to the paths, as shown in the circled area of Figure~\ref{fig:frac-4} (a). This lowers the stretch factor of the fractal to 2.66.

We adjust the shape of the fractal to increase the stretch factor. In Figure~\ref{fig:frac-4} (b), we obtained a stretch factor of more than 2.87 by equalizing the length of all shortest paths between $u$ and $v$, as shown in Figure~\ref{fig:frac-4} (b). The exact locations of the points are given in the appendix.

\section{$YY_5$ is not a Spanner} \label{section:yy}
We give a construction of a $YY_5$ graph whose stretch factor is unbounded. Figure~\ref{fig:yy} shows the initial steps of constructing such a $YY_5$ graph, where the path between $a$ and $b$ can grow horizontally to the right by adding more points following the pattern, exceeding any bound on the stretch factor.


\begin{figure}[tbhp]
\begin{center}\psset{unit=18pt}
\begin{pspicture*}(-1,-1)(12.8,5)

	\pnode(0,0){a}\uput[180](a){$a$}
	\pnode(1.3,3.5){b}\uput[180](b){$b$}
	\pnode([nodesep=13,angle=1]{a}a){a0}
	\pnode([nodesep=13,angle=-1]{b}b){b0}
	\pnode([nodesep=2.6]{a0}a){a1}
	\pnode([nodesep=2.6]{a0}a1){a2}
	\pnode([nodesep=2.6]{a0}a2){a3}
	\pnode([nodesep=2.6]{a0}a3){a4}
	\pnode([nodesep=2.6]{b0}b){b1}
	\pnode([nodesep=2.6]{b0}b1){b2}
	\pnode([nodesep=2.6]{b0}b2){b3}
	\pnode([nodesep=2.6]{b0}b3){b4}

\def\cones{
	\pnode(0,0){O}
	\pnode(100,0){I}
	\pnode([nodesep=100,angle=72]{I}O){II}\uput[270](II){$II$}
	\pnode([nodesep=100,angle=72]{II}O){III}
	\pnode([nodesep=100,angle=72]{III}O){IV}
	\pnode([nodesep=100,angle=72]{IV}O){V}
		
	\psline[linecolor=lightgray](O)(I)
	\psline[linecolor=lightgray](O)(II)
	\psline[linecolor=lightgray](O)(III)
	\psline[linecolor=lightgray](O)(IV)
	\psline[linecolor=lightgray](O)(V)
}


\uput{0}[0](a
){\cones}
\uput{0}[0](b
){\cones}
\uput{0}[0](a0
){\cones}
\uput{0}[0](b0
){\cones}
\uput{0}[0](a1
){\cones}
\uput{0}[0](a2
){\cones}
\uput{0}[0](a3
){\cones}
\uput{0}[0](a4
){\cones}
\uput{0}[0](b1
){\cones}
\uput{0}[0](b2
){\cones}
\uput{0}[0](b3
){\cones}
\uput{0}[0](b4
){\cones}


	\psline[linewidth=1.5pt,linecolor=red](a)(a1)(a2)(a3)(a4)(a0)(b0)(b4)(b3)(b2)(b1)(b)

\psdot(a
)
\psdot(b
)
\psdot(a0
)
\psdot(b0
)
\psdot(a1
)
\psdot(a2
)
\psdot(a3
)
\psdot(a4
)
\psdot(b1
)
\psdot(b2
)
\psdot(b3
)
\psdot(b4
)

\pnode(12.1,1.75){d1}
\pnode([nodesep=0.3]{d1}d1){d2}
\pnode([nodesep=0.3]{d2}d2){d3}
\psdot[linecolor=red,dotstyle=*,dotscale=0.8](d1)
\psdot[linecolor=red,dotstyle=*,dotscale=0.8](d2)
\psdot[linecolor=red,dotstyle=*,dotscale=0.8](d3)

\end{pspicture*}
\caption{The initial steps of constructing a $YY_5$ graph with unbounded stretch factor. The pattern continues to the right. The gray lines are the boundaries of the cones.}\label{fig:yy}
\end{center}
\end{figure}
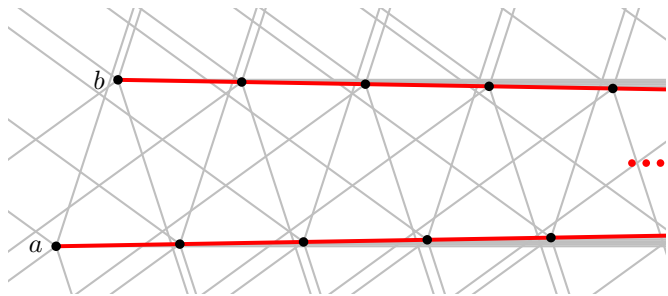

\section{Concluding Remarks}
\label{sec:conclusion}
In this paper we prove that the stretch factor of $Y_5$ is in the interval (2.87, 3.74). While the gap between the upper bound of 3.74 and the lower bound of 2.87 proved in this paper is small, the tight bound of the stretch factor of $Y_5$ remains unknown. Similarly, it will be interesting to study the tight bounds of other Yao graphs $Y_k$ for $k \geq 4$.

Clearly, the Yao-Yao graphs are less well understood than the Yao graphs. While we know some partial results on the stretch factors of Yao-Yao graphs, many questions about the spanning properties of Yao-Yao graphs remain unresolved. For example, are the Yao-Yao graphs spanners for {\em all} $k > 6$?

\bibliographystyle{plain}
\bibliography{ref}

\section{Appendix}
The following are the locations of the points in the $Y_5$ graph shown in Figure~\ref{fig:frac-4} (b) whose stretch factor is more than 2.87.

\begin{tabular}{l l}
\hline
\multicolumn{2}{c}{Locations of the points in Figure~\ref{fig:frac-4} (b)} \\
\hline
(0, 0)& $u$\\
(252, 82)& $v$\\
(130, 230)& $z$\\
(12, 193)& $w$\\
(30, 302)& \\
(293, 269)& \\
(321, 229)& \\
(-143, 130)& \\
(-143, 80)& \\
(193, 384)& \\
(158, 367)& \\
(-135, 272)& \\
(-91, 287)& \\
(-153, -55)& \\
(371, 75)& \\
(410, 115)& \\
(334, 276)& \\
(341, 264)& \\
(-179, 97)& \\
(-180, 112)& \\
(-91, -75)& \\
(316, 36)& \\
(352, 229)& \\
(303, 297)& \\
(-167, 63)& \\
(-167, 147)& \\
(-26, -75)& \\
(371, 213)& \\
(51, 310)& \\
(-176, 37)& \\
(344, 274)& \\
(-189, 105)& \\
(99, 320)& \\
(-15, 284)& \\
\hline
\end{tabular}

\end{document}